\documentclass[11pt,a4paper]{amsart}

\usepackage[
a4paper,
left=32mm,
right=32mm,
top=30mm,
bottom=30mm
]{geometry}

\usepackage[T1]{fontenc}
\usepackage[utf8]{inputenc} 
\usepackage[english]{babel}
\usepackage{microtype}

\usepackage{amsmath}
\usepackage{amssymb}
\usepackage{amsthm}
\usepackage{mathtools}
\usepackage{bm}
\usepackage{mathrsfs}

\usepackage{graphicx}
\usepackage{booktabs}
\usepackage{tikz}
\usepackage{array}

\usepackage{xcolor}
\usepackage[
colorlinks=true,
linkcolor=blue,
citecolor=blue,
urlcolor=blue
]{hyperref}
\usepackage[nameinlink,noabbrev]{cleveref}

\theoremstyle{plain}
\newtheorem{theorem}{Theorem}[section]

\theoremstyle{definition}

\theoremstyle{remark}

\newcolumntype{M}{>{$}c<{$}}

\newcommand{\ZZ}{\mathbb{Z}}

\newcommand{\slalg}{\mathfrak{sl}}
\newcommand{\soalg}{\mathfrak{so}}
\newcommand{\spalg}{\mathfrak{sp}}

\begin{document}
	
	\title{Uniqueness of universal quantum dimensions}
	
	\author{M.Y. Avetisyan and R. L. Mkrtchyan}
	\address{Alikhanyan National Science Laboratory (Yerevan Physics Institute), 2 Alikhanyan Brothers St., Yerevan, 0036, Armenia}
	\email{maneh.avetisyan@gmail.com; mrl55@list.ru}
	\keywords{Quantum dimension, Vogel universality, Lie algebras, Chern-Simons theory}

	\maketitle

	{\small  {\bf Abstract.} 
We prove, under some assumptions, the uniqueness of universal quantum dimension formulae.   
We first show that the quantum dimensions of classical algebras have a specific form that can be  represented universally, in Vogel's sense, as the product/ratio of q-dimension factors with arguments $a\alpha+b\beta+c\gamma$ where  $c=0, 1, 2$ and  $a, b$ are integers. On this basis  we  derive simple uniqueness criteria, according to which all known universal quantum dimensions are unique.

	}

	\tableofcontents

	\section{Introduction}

	For a specific simple Lie algebra, the quantum dimension is defined as the character on the Weyl line in the root space and can be represented as a product over positive roots (see, e.g.,  \cite{FrancescoMathieuSenechal1997}, eq. 13.170): 
	
	\begin{align}\label{qWeyl}
		dim_q(\lambda)=\chi_\lambda (x\rho)= \prod_{\alpha > 0} \frac{\sinh((\alpha,\lambda+\rho)x/2)}{\sinh((\alpha,\rho)x/2)}
	\end{align}
	where $\lambda$ is the highest weight, $\rho$ is the Weyl vector (the sum of fundamental weights), $x=\ln q$ is the parameter of  deformation.  In \eqref{qWeyl}  we assume the minimal normalization of the scalar product in the algebra, i.e., the square of the long root(s) is $2$. 
	The usual dimension can be obtained from (\ref{qWeyl}) in the $x \rightarrow 0$ limit. 
	
	The universal, in Vogel's sense \cite{Vogel1999,Vogel2011}, quantum dimension formulae are the products of the form 
	
	\begin{align}\label{gform}
		\prod_{i=1}^{k} \frac{\sinh(xa_i)}{\sinh(xb_i)}
	\end{align}
	where $a_i, b_i$ are linear functions of Vogel's parameters $\alpha,\beta, \gamma$, e.g. $a_i=x_i\alpha +y_i\beta+z_i\gamma$. 
	
	The main advantage of the universal form (\ref{gform}) is  that, when the parameters are specialized to the values corresponding to a given simple Lie algebra in Vogel's \cref{tab:Vogel}, it gives the quantum dimension of a specific representation of that algebra (with some subtleties related to automorphisms of Dynkin diagrams). These specific representations  belong to the {\it universal multiplet}, i.e. the set of representations whose (quantum) dimensions are obtained from a given universal formula by specializing the Vogel's parameters  to a given algebra, and also from the permutations of that parameters. 
	
	The simplest example of universal quantum dimension is that for adjoint representations \cite{Westbury2003}:
	
	\begin{align}  \label{cad}
		udim_q(\mathfrak{g}) = -\frac{\sinh\left(\frac{(\gamma+2\beta+2\alpha)x}{4}\right)}{\sinh\left(\frac{\gamma x}{4}\right)}\frac{\sinh\left(\frac{(2\gamma+\beta+2\alpha)x}{4}\right)}{\sinh\left(\frac{\beta x}{4}\right)}\frac{\sinh\left(\frac{(2\gamma+2\beta+\alpha)x}{4}\right)}{\sinh\left(\frac{\alpha x}{4}\right)}
	\end{align}
	
	Universal multiplet of this universal quantum dimension formula consists from adjoint representation for each simple algebra, and for the values of parameters from the \cref{tab:Vogel} formula \eqref{cad} gives the quantum dimension of adjoint representation of corresponding algebra. The universal quantum dimension formulae are know almost for all representations, for which the usual universal dimension formulae exist.  
	
	The universal quantum dimensions have a number of applications, particularly in the Chern-Simons theory and the knot theory. An example is the universal expression for the partition function of Chern-Simons theory on the three-dimensional sphere \cite{MkrtchyanVeselov2012,Mkrtchyan2013}:
	
\begin{align}
	-\ln Z &=
	(udim/2)\ln(\delta/t)+
	\int^{\infty}_0 \frac{dx}{x} \frac{f(x/\delta)-f(x/t)}{(e^{x}-1)} \label{Ftotal} \\
	f(x)&=udim_q(\mathfrak{g}), \quad udim=\lim_{q \to 1} udim_q 
\end{align}
	where $\delta$ is shifted coupling of Chern-Simons. For universal knot polynomials see \cite{MironovMkrtchyanMorozov2016,MironovMorozov2016,MironovSingh2026}. 
	
The problem we address in this paper is the problem of the uniqueness of universal quantum dimension formulae.  This problem generally can be formulated as whether there exist another universal formula for a given universal multiplet. Evidently, if we can find the universal formula, called a non-uniqueness factor, of the same kind \eqref{gform} such that it is equal to 1 on all lines from Vogel's \cref{tab:Vogel} and their permutations (totally 12 lines), then we can multiply it by any universal quantum dimension formula and get another formula with the same properties.  

In this form, the problem was studied in \cite{AvetisyanMkrtchyan2022}, where some partial results are obtained. Particularly, the problem was connected  with the theory of configurations of points and lines, and specifically with the existence of the special $(144_{3},36_{12})$ configuration. 
	
In the present paper, we impose weaker requirements on the non-uniqueness factor (thus allowing more general formulae), adapting it to a given universal quantum dimension formula. Namely, we allow the factor to take arbitrary non-zero values on the lines where the given universal quantum dimension vanishes, and require it to be equal to 1 only on the remaining lines. We have found the complete solution of this problem, due to the observation presented below in \cref{thm:1}, concerning the general form of quantum dimensions for a classical algebras, together with a subsequent natural assumption about the general form  of the universal quantum dimension formulae. We then search for a  non-uniqueness factor that preserves that general form, and derive the criteria for its existence.  This is similar to restricting a gauge field to some gauge and asking whether some gauge transformations maintain that gauge. 
	
The answer (\cref{thm:2}) has a very simple form: we assign the   weight 0, 1, or 2 to each of the 12 lines from Vogel's \cref{tab:Vogel} and their permutations, and show that non-uniqueness factor do not exist on the sets of lines with the sum of weights greater than two. This means uniqueness of formulae with non-zero values on those sets of lines. It appears that all existing universal quantum dimension formulae are unique. Particularly, the formula for the universal quantum dimension of the universal multiplet $E$, derived in \cite{AvetisyanMkrtchyan2026},  is unique, as anticipated there.

	\section{The $N$-dependence of quantum dimensions for classical algebras}
	
	Consider quantum dimensions of adjoint representation for classical simple Lie algebras: 
	
	For $\slalg(N)$:
	\begin{eqnarray}\label{qsl}
		\frac{\sinh\left(\frac{x}{2}(N-1)\right) \sinh\left(\frac{x}{2}(N+1)\right)}{\sinh\left(\frac{x}{2}\right)\sinh\left(\frac{x}{2}\right)}
	\end{eqnarray}
	
	For $so(N)$:
	
	\begin{eqnarray}\label{qso}
		\frac{\sinh\left(\frac{x}{4}N\right) \sinh\left(\frac{x}{2}(N-1)\right) \sinh\left(\frac{x}{2}(N-4)\right)}{\sinh\left(\frac{x}{2}\right)\sinh\left(x\right)\sinh\left(\frac{x}{4}(N-4)\right) }
	\end{eqnarray}
	
	For $sp(N)$:
	
	\begin{eqnarray}\label{qsp}
		\frac{\sinh\left(\frac{x}{8}N\right) \sinh\left(\frac{x}{4}(N+1)\right) \sinh\left(\frac{x}{4}(N+4)\right)}{\sinh\left(\frac{x}{4}\right)\sinh\left(\frac{x}{2}\right)\sinh\left(\frac{x}{8}(N+4)\right) }
	\end{eqnarray}

	One can observe that all hyperbolic sines in these formulae have a specific form presented  for any representation in the following theorem:

		\begin{theorem}\label{thm:1}
		Let
		\begin{align} 
		\lambda=\sum_{r=1}^{k}m_r\omega_r,   \quad m_r \in Z_{\geq 0}
		\end{align} 
		be the highest weight of some irreducible representation $V_\lambda$  of $\slalg(N), \soalg(N)$ or $\spalg(N)$, where $\omega_r$ are the fundamental weights,  $k$ and  $m_r$ are fixed independently of sufficiently large $N$. 
		
		Then, it is possible to make some cancellations in the Weyl product \cref{qWeyl} for the quantum dimension  $dim_q(\lambda)$  of the representation $V_\lambda$, so that the only dependence on $N$ will be in the argument of every surviving hyperbolic sine, which will be   the linear  function of $N$ of the form
		\begin{align} 
		\sinh\left(\frac{x}{4}(cN+a)\right), \quad a\in \ZZ
		\end{align} 
		with 
		\begin{align} 
		c\in\{0,2\} \enspace \text{for} \enspace \slalg,  \quad 	c\in\{0,1, 2\} \enspace \text{for} \enspace \soalg,  \quad	c\in\{0, \frac 12, 1 \} \enspace \text{for}\enspace  \spalg
		\end{align} 
	\end{theorem}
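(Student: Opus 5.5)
We will prove the statement separately for the three series by working in the standard orthonormal basis $\epsilon_i$ and computing the root pairings $(\alpha,\lambda+\rho)$ and $(\alpha,\rho)$ explicitly. The normalization of \eqref{qWeyl} fixes the scale in each case: long roots have square $2$. For $\slalg(N)$ the roots are $\epsilon_i-\epsilon_j$, and each pairing is an integer difference, giving arguments $\frac{x}{2}(\text{integer})$. For $\soalg(N)$ the roots are $\pm\epsilon_i\pm\epsilon_j$ and (for odd $N$) $\epsilon_i$, with $\rho_i=\frac N2-i$. For $\spalg(N)$, with $N=2n$, the roots are $\pm\epsilon_i\pm\epsilon_j$ and $2\epsilon_i$ rescaled so that the long roots $2\epsilon_i$ have square 2. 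Thus the scalar product is halved and $\rho_i=n+1-i=\frac N2+1-i$.

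Write $\lambda+\rho$ in coordinates as $\ell_i+\rho_i$. The key observation is that, since $\lambda$ has support on the first $k$ fundamental weights with bounded $m_r$, the partition $\ell_i$ is nonzero only for $i\le k$ and is bounded independently of $N$. For the $\slalg$ case we also use the dual form, where $\lambda$ has bounded support near both ends of the diagram if the statement is read that way, and we treat these indices symmetrically. We then split the positive roots $\epsilon_i-\epsilon_j$ ($i<j$) according to whether $i,j$ lie in the \emph{head} $H=\{1,\dots,k\}$ or the \emph{tail} $T=\{k+1,\dots\}$.

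\textbf{Step 1: the tail--tail roots.} When both $i,j\in T$, we have $\lambda$ vanishing on them, so $(\alpha,\lambda+\rho)=(\alpha,\rho)$. The numerator and denominator factors are identical and cancel exactly. This is the cancellation referred to in the statement.

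\textbf{Step 2: the head--head roots.} For $\epsilon_i\pm\epsilon_j$ with $i,j\in H$, the "$-$" root pairing $\rho_i-\rho_j=j-i$ is $N$-independent, giving $c=0$. The "$+$" root gives $\rho_i+\rho_j=N-i-j$ (orthogonal) or $N+2-i-j$ (symplectic). The long roots $2\epsilon_i$ in $\spalg$, and the short roots $\epsilon_i$ in odd $\soalg$, also contribute linear terms in $N$. After inserting the factor $x/2$ and the scalar normalization, these arguments become $\frac x4(2N+a)$ and $\frac x4(N+a)$ for $\soalg$, and $\frac x4(N+a)$ and $\frac x4(\tfrac N2+a)$ for $\spalg$, matching the claimed values of $c$.

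\textbf{Step 3: the head--tail roots.} This is the main obstacle, because there are $O(N)$ such factors and they do not cancel individually. For fixed $i\in H$, the product over $j\in T$ of
\[
\frac{\sinh(\tfrac{x}{2}(\ell_i+\rho_i-\rho_j))}{\sinh(\tfrac x2(\rho_i-\rho_j))}
\]
is a ratio of consecutive-integer sinh products. The numerator and denominator run over arithmetic progressions of length $\sim N$, shifted by the bounded integer $\ell_i$. The ratio therefore telescopes, leaving $\ell_i$ factors at each end. At the lower end the arguments are $N$-independent ($c=0$). At the upper end the arguments are $\rho_i-\rho_N\sim N$ for $\slalg$, giving $c=2$ after the $\frac x4$ rewriting. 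Similarly, the $\epsilon_i+\epsilon_j$ family telescopes with endpoints near $\rho_i+\rho_{k+1}\sim N$, and the $\epsilon_i$ or $2\epsilon_i$ families contribute bounded numbers of factors with the linear arguments found in Step 2.

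The care needed in Step 3 is in tracking the endpoints and the half-integer shifts in $\rho$ for odd $N$ in $\soalg$ (and for $\spalg$ with the halved product), to confirm that $a$ is an integer in every case. We will check this by writing all arguments uniformly as $\frac x4(cN+a)$. Finally, the $\soalg$ spinor-type or $D_n$ endpoint subtleties do not arise, since $\lambda$ is supported on the first $k$ nodes with $k$ fixed and $N$ large.
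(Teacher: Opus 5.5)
Your proposal is correct and is essentially the paper's proof: the same head/tail split of the positive roots, exact cancellation of the tail--tail factors, and telescoping of the head--tail products for each fixed $i\le k$. The one difference is odd $N$: the paper shows in an appendix that $B_n$ gives the same function of $N$ as $D_n$, while your direct tracking of the half-integer $\rho_i$ also suffices for the stated form. One small correction: in $\spalg$ the roots $2\epsilon_i$ give $\sinh\bigl(\frac x4(N+2-2i+2\ell_i)\bigr)$, i.e.\ $c=1$, not $c=\frac12$. The value $c=\frac12$ comes from the $j=n$ ends of the telescoped head--tail products, with arguments near $\rho_i\mp\rho_n\sim N/2$; the same ends give $c=1$ for the $\epsilon_i-\epsilon_j$ family in $\soalg$. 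You should make these ends explicit alongside the $\slalg$ endpoint $\rho_i-\rho_N$.
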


	Below we prove the \cref{thm:1}  for the case of $D_n$ algebras, other cases are similar and are considered in Appendixes. 
	
	\begin{proof}

	Consider algebra $so(N)$ with even $N=2n$, so with root system $D_n$. 
	The set of positive roots, and Weyl vector, can be represented as 
	
	\begin{align}
		\Phi^+(D_n)&=\{e_i-e_j, e_i+e_j| 1\leq i <  j \leq n\} \\
		 \rho&=\sum_{i=1}^{n} (n-i)e_i
	\end{align}
	
	Taking into account that $k$ is fixed, and $n$ large, we also have
	
	\begin{align}
		\omega_r=e_1+...+e_r, \quad r\leq k \\
		\lambda=\sum_{i=1}^{k} l_i e_i, \quad l_i=\sum_{r=i}^{k} m_r \quad (i=1,...,k), \quad l_i=0 \quad(i>k)
	\end{align}
	
	The set of positive roots in the Weyl formula \eqref{qWeyl} we divide into three sets. 
	
	{\bf 1. Roots with $i,j \leq k$}
	
	For root $\alpha=e_i-e_j$ we have
	
	\begin{align}
		(\rho,e_i-e_j)=j-i \\
		(\lambda+\rho, e_i-e_j)=j-i+l_i-l_j
	\end{align}
	
	Since both are independent of $n$, they lead to the contribution into \eqref{qWeyl} with $c=0$. 
	
	For root $\alpha=e_i+e_j$ we have
	
		\begin{align}
		(\rho,e_i+e_j)=2n-j-i \\
		(\lambda+\rho, e_i+e_j)=2n-j-i+l_i+l_j
	\end{align}
	
	So, the contribution in  \eqref{qWeyl} will be 
	
	\begin{align}
		\frac{\sinh\left(\frac{x}{4}(2N-2i-2j+2l_i+2l_j)\right) } 	{\sinh\left(\frac{x}{4}(2N-2i-2j)\right)}
	\end{align}
	
	i.e. it corresponds to $c=2$.

	{\bf 2. Roots $e_i-e_j$ with $i\leq k <j$ }

	For root $\alpha=e_i-e_j$ we have
	
	\begin{align}
		(\rho,e_i-e_j)=j-i \\
		(\lambda+\rho, e_i-e_j)=j-i+l_i
	\end{align}

	So, the contribution into  \eqref{qWeyl} will be 
	
	\begin{align}\label{so-ikj}
		\frac{\sinh\left(\frac{x}{4}(2j-2i+2l_i)\right) } 	{\sinh\left(\frac{x}{4}(2j-2i)\right)}
	\end{align}
	
	The contribution of all such roots with fixed $i$ will be the product of \eqref{so-ikj} over $j$ from $j=k+1$ to $j=n$. After telescopic cancellation we have the final answer: 
	
	\begin{align}
		\prod_{r=1}^{l_i}  \frac{\sinh\left(\frac{x}{4}(2n-2i+2r)\right) } 	{\sinh\left(\frac{x}{4}(2k+2r-2i)\right)}
	\end{align}
	
	This is in agreement with our statement that dependence on $N$ is in the arguments, only, and in this case we have  $c=1$. 
	
		{\bf 2. Roots $e_i+e_j$ with $i\leq k <j$ }
	
		For root $\alpha=e_i+e_j$ we have
	
	\begin{align}
		(\rho,e_i+e_j)&=2n-i-j \\
		(\lambda+\rho, e_i+e_j)&=2n-i-j+l_i
	\end{align}

	So, the contribution into  \eqref{qWeyl} will be 
	
	\begin{align}\label{so-ikj-2}
		\frac{\sinh\left(\frac{x}{4}(4n-2i-2j+2l_i)\right) } 	{\sinh\left(\frac{x}{4}(4n-2i-2j)\right)}
	\end{align}
	
	The total contribution of these roots will be the product of \eqref{so-ikj-2} over $j$ from $j=k+1$ to $j=n$. After telescopic cancellation we obtain 
	
	\begin{align}
		\prod_{r=1}^{l_i}  \frac{\sinh\left(\frac{x}{4}(4n-2i-2k-2+2r)\right) } 	{\sinh\left(\frac{x}{4}(2n-2i-2+2r)\right)}
	\end{align}
	
	Again, this expression satisfies all conditions we stated, particularly here $c=1,2$.  
	
	Remaining roots $e_i \pm e_j$ with $i,j >k$ do not contribute since numerator and denominator of \eqref{gform} coincide. 
	
		\end{proof}

	\section{Uniqueness of universal quantum dimension formulae}
	
	One can easily check that the form of quantum dimension of  \cref{thm:1} appears from the expression  (\ref{gform}) with individual hyperbolic sines of the form 
	
	\begin{align}\label{formC}
		\sinh\left(\frac{x}{4}(a\alpha+b\beta+c\gamma)\right)
	\end{align}
	with some integer $a,b$, and $c=0,1,2$.  
	
	This is the abovementioned restricted form of universal quantum dimensions.  When specialized to a given classical algebra, it recovers the form derived in  \cref{thm:1} above. Note particularly  that according to $\spalg$'s parameters  in Vogel's table, this form involves the correct coefficients $c$ for $\spalg$, also.  Actually, we suggest the following
	
	{\bf Conjecture.} {\it All universal quantum dimension formulae have the following form:
	
		\begin{align}\label{gformconj}
		udim_q=\prod_{i=1}^{k} \frac{	\sinh\left(\frac{x}{4}(a_i\alpha+b_i\beta+c_i\gamma)\right)}{	\sinh\left(\frac{x}{4}(a'_i\alpha+b'_i\beta+c'_i\gamma)\right)}
	\end{align}
	where $a_i, b_i, a'_i, b'_i$ are integer, and $c_i, c'_i \in \{0,1,2\}$.}
	
	All known universal quantum dimension formulae have this form.

	Let us have a universal quantum dimension formula $udim_q$ of the form \eqref{gformconj}, which is non-zero on the subset of lines $L=\{l_1, l_2,...\}$ out of 12 lines of simple Lie algebras from Vogel's table, listed in the first column of \eqref{tab:mytable}. Let's call {\it non-uniqueness factor Q of the given universal formula $udim_q$} the universal function $Q$ of the same form \eqref{gformconj}, which is equal to 1 on  all lines from the set $L$, and arbitrary (not singular) on the other lines. The meaning of $Q$ is evident: the product $Q \, udim_q$ is equal to $udim_q$ on all 12 lines of simple Lie algebras and can serve as a universal formula instead of $udim_q$. This definition differs from what we used in \cite{AvetisyanMkrtchyan2022}, since now we allow $Q$ to be  arbitrary on the lines where $udim_q=0$, while earlier we required $Q=1$ on those lines, too. However, now we require $Q$ to maintain the form \eqref{gformconj}. 
	
	So, we are interested in the non-trivial ($Q\neq 1$) non-uniqueness factor $Q$ for a given dimension formula $udim_q$. The following theorem provides an answer to that question. 
	
	\begin{theorem}\label{thm:2}
		Let we have universal quantum dimension formula $udim_q$ in the form \eqref{gformconj}, which is non-zero on the lines $L=\{l_1, l_2,...\}$ and zero on other lines, given in Vogel's \cref{tab:Vogel} and their permutations (see the list in the first column of the \cref{tab:mytable}). Define the weight $w(l_i)$ of each line according to the third column of \cref{tab:mytable} (actually it is the coefficient of $\gamma$ in that line's equation), and define  the weight $w(L)$ of $udim_q$ as a sum  of weights of the lines $l_1, l_2,...$:

		\begin{align}
			w(L)=\sum_{i=1} w(l_i)
		\end{align}
		
		Then, if $w(L) \geq 3$, there is no non-trivial non-uniqueness factor $Q$. 
		  
	\end{theorem}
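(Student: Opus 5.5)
The plan is to translate the condition ``$Q=1$ on a line'' into linear equations in a Laurent polynomial ring, and then show that when $w(L)\ge 3$ these equations force $Q$ to be trivial.

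\textbf{Step 1 (encoding $Q$).} Write $Q$ in the form \eqref{gformconj} as a formal $\ZZ$-combination $f=\sum_i n_i[\ell_i]$ of linear forms $\ell_i=a_i\alpha+b_i\beta+c_i\gamma$ with $c_i\in\{0,1,2\}$. Numerator factors get $n_i>0$ and denominator factors $n_i<0$. Two ratios of products of hyperbolic sines are identically equal in $x$ iff the multisets of arguments agree up to sign. This uses the oddness of $\sinh$ and compares zeros/poles, or the expansion in $e^{\pm x\ell/4}$.

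To remove the sign ambiguity, I pass to the antisymmetrization $F=\sum_i n_i\,(e^{\ell_i}-e^{-\ell_i})$. This is an element of the group ring $\ZZ[\ZZ^3]$, which I identify with the Laurent ring $\ZZ[A^{\pm1},B^{\pm1},T^{\pm1}]$. Then $Q\equiv 1$ iff $F=0$.

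Because $F(-\ell)=-F(\ell)$, $F$ is determined by its $T^0$, $T^1$, $T^2$ parts. I write
\begin{align*}
F=f_0+f_1T+f_2T^2+(\text{antisymmetric images}),
\end{align*}
with $f_c\in\ZZ[A^{\pm1},B^{\pm1}]$. Then $Q$ is trivial iff $f_0=f_1=f_2=0$; for $f_0$ this is understood modulo its own antisymmetry.

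\textbf{Step 2 (restriction to a line).} Each of the 12 lines is a linear relation $p\alpha+q\beta+w\gamma=0$, and its weight is $w\in\{0,1,2\}$. I take the three cases in turn.
\begin{itemize}
\item For $w=1$, substitute $\gamma=-(p\alpha+q\beta)$, i.e. $T=X:=A^{-p}B^{-q}$. The condition $Q|_l=1$ becomes $f_0+Xf_1+X^2f_2=0$, up to the antisymmetric completion.
\item For $w=2$, the substitution $\gamma=-(p\alpha+q\beta)/2$ makes the $c=1$ forms land on a half-integer coset of the lattice, disjoint from the images of the $c=0,2$ forms. So the conditions split into $f_1=0$ and $f_0+Uf_2=0$, where $U=A^{-p}B^{-q}$.
\item For $w=0$ (e.g.\ $\alpha+\beta=0$), the coefficient $c$ is preserved. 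These lines therefore give no mixing between different $c$, and I will simply not use them.
\end{itemize}

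\textbf{Step 3 (case analysis for $w(L)\ge3$).} The lines in $L$ of positive weight contain either three weight-1 lines, or a weight-2 line together with another positive-weight line. Throughout I use that $\ZZ[A^{\pm1},B^{\pm1}]$ is an integral domain.
\begin{itemize}
\item Three weight-1 lines, with distinct $X,Y,Z$. Subtracting pairs of equations gives $(X-Y)\bigl(f_1+(X+Y)f_2\bigr)=0$ and $(X-Z)\bigl(f_1+(X+Z)f_2\bigr)=0$. Hence $(Y-Z)f_2=0$, so $f_2=0$, and then $f_1=f_0=0$.
\item A weight-2 line with one weight-1 line. We get $f_1=0$, $f_0=-Uf_2$ and $f_0=-X^2f_2$. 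Hence $(U-X^2)f_2=0$.
\item Two weight-2 lines. Similarly $(U-U')f_2=0$.
\end{itemize}
In the last two cases I must verify from the explicit equations in \cref{tab:mytable} that $U\ne X^2$ and $U\ne U'$ for every relevant pair of distinct lines. This is a finite check of the coefficient vectors: the two planes are distinct, so the induced substitutions differ.

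\textbf{Main obstacle.} The main difficulty is making Step 2 rigorous with the signs. The antisymmetric completion produces terms with $c=-1,-2$. After substituting $T=X$, these can collide with images of $c\ge0$ terms, so the clean equation $f_0+Xf_1+X^2f_2=0$ holds only modulo the involution $m\mapsto -m^{-1}$.

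I would first try to handle this by applying the involution-compatible projection to the subring spanned by monomials with a chosen positive ordering. If that fails, I would instead work with the full equation $\sum_{c=-2}^{2}X^c f_c=0$, where $f_{-c}$ is determined by $f_c$, and redo the elimination above. The extra unknowns are tied to the old ones by the antisymmetry, so the same determinant-type (Vandermonde) argument should still go through with three independent positive-weight constraints.
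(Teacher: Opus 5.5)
\textbf{There is a genuine gap in Step 1, and it cannot be repaired in the direction you propose.}

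The antisymmetrization $F=\sum_i n_i(e^{\ell_i}-e^{-\ell_i})$ is not an invariant of the function $Q$. Take a numerator form $\ell$ and a denominator form $\ell'$ with $\ell|_l=-\ell'|_l=m$. This pair contributes only a factor $-1$ to $Q|_l$, but it contributes $2(e^{m}-e^{-m})\neq0$ to $F|_l$. So sign-flipped pairs can give $Q|_l=1$ with $F|_l\neq0$, and the implication ``$Q|_l=1\Rightarrow f_0+Xf_1+X^2f_2=0$'' is not available.

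The sign-blind invariant is the symmetrization $S=\sum_i n_i(e^{\ell_i}+e^{-\ell_i})$, which detects $Q|_l=\pm1$. But then $T^2S$ is a polynomial of degree $4$ in $T$, and three weight-one roots no longer force it to vanish. So your hope that ``the same Vandermonde argument should still go through'' fails.

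In fact, once sign-flipped cancellations are admitted, the statement itself is false at weight $3$. Take $Q$ with numerator arguments $3\alpha+\gamma$, $-5\alpha-\beta+\gamma$, $-4\alpha-2\beta+\gamma$, $-2\alpha+\beta+\gamma$, and denominator arguments $-4\alpha-\beta+2\gamma$, $2\alpha$, $3\alpha-\beta$, $5\alpha+2\beta$ (all multiplied by $x/4$).
\begin{itemize}
\item On $\alpha+\gamma=0$ they restrict to $2\alpha,\,-6\alpha-\beta,\,-5\alpha-2\beta,\,-3\alpha+\beta$ versus $-6\alpha-\beta,\,2\alpha,\,3\alpha-\beta,\,5\alpha+2\beta$.
\item On $\beta+\gamma=0$ they restrict to $3\alpha-\beta,\,-5\alpha-2\beta,\,-4\alpha-3\beta,\,-2\alpha$ versus $-4\alpha-3\beta,\,2\alpha,\,3\alpha-\beta,\,5\alpha+2\beta$.
\item On $\gamma=2\alpha+2\beta$ they restrict to $5\alpha+2\beta,\,-3\alpha+\beta,\,-2\alpha,\,3\beta$ versus $3\beta,\,2\alpha,\,3\alpha-\beta,\,5\alpha+2\beta$.
\end{itemize}
Each time the two lists agree up to exactly two sign flips. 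Hence $Q\equiv1$ on these three lines, whose total weight is $3$. Yet $Q$ has the form \eqref{gformconj}, is not identically $1$, and none of its denominator arguments vanishes on any of the twelve lines. In your variables, $T^2S=(T-A^{-1})(T-B^{-1})(T-A^2B^2)(A^3-A^{-4}B^{-1}T)$: three roots plus a factor compatible with the symmetry. For this $Q$ one also has $F|_l\neq0$ on $\alpha+\gamma=0$.

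\textbf{What works instead.} The paper's proof avoids all this by reading ``$Q=1$ on $l$'' as exact cancellation of restricted numerator arguments against restricted denominator arguments, with no sign flips. That is what its count of equal $(p,q)$ and its generating function $P(x,y,u)=\sum z_{a,b,c}x^ay^bu^c$ encode. Under that convention you should drop the antisymmetric completion altogether and work with $P=f_0+f_1T+f_2T^2$ itself.
\begin{itemize}
\item Restriction to a weight-one line is then literally the ring map $T\mapsto X$.
\item A weight-two line gives $f_1=0$ and $f_0+Uf_2=0$, exactly as you say.
\item Your Step 3, with the checks that $X,Y,Z$ are distinct, $U\neq X^2$ and $U\neq U'$, then gives a correct proof.
\end{itemize}
This is the same argument as the paper's. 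The paper phrases it as divisibility of $P$ by the pairwise coprime $q(l)$ together with $\deg_u P\le2$. You replace that divisibility and degree count by evaluation at distinct points and elimination in the domain $\ZZ[A^{\pm1},B^{\pm1}]$. Your distinctness checks are precisely the coprimality the paper asserts.
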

	
	\begin{proof}
	Let's call hyperbolic sines in the numerator of $Q$ the red sines, and those in the denominator - green ones. The mechanism of transformation of $Q$ into 1 after restriction on some line  is unique: all red factors should cancel with all green ones. 
	
	Let's introduce the generating function

	\begin{align}\label{Pgen}
		P(x,y,u)=\sum_{a,b\in \ZZ} \sum_{c=0}^{2} z_{a,b,c} x^a y^b u^c
	\end{align}
	
	where 
	
	\begin{align}\label{defz}
		z_{a,b,c}=\# red(a,b,c)-\# green(a,b,c)
	\end{align}
	
	where $\# red(a,b,c)$ is the number of red factors in \eqref{gformconj} with universal argument with coefficients $a,b,c$. $\# green(a,b,c)$ is the same for green factors. It is  evident that for given $(a,b,c)$ only one of the numbers $\# red(a,b,c), \# green(a,b,c)$ can be non-zero, otherwise those factors cancel before the restriction of $Q$ on any line. 
	
	Since the total number of red and green factors is finite, the generating function $P(x,y,u)$ has a finite number of terms, despite the infinite range of sums over $a, b$. 
	
	Now let's restrict Vogel's parameters to one of the lines, e.g. $\alpha+\beta=0$, on which $Q=1$.  A linear function in an argument of some arbitrary sinh becomes
	\begin{align}
		a\alpha+b\beta+c\gamma=(a-b)\alpha+c\gamma
	\end{align}
	
	So, all sets $(a,b,c)$ giving the same function differ by an integer multiple of the vector $(1,1,0)$.
	
	We should count the number of functions $p\alpha+q\gamma$ appearing from red factors, and the number of the same functions (i.e. with the  same $p,q$) appearing from green factors. They should be equal, since $Q=1$ on this line. That means, taking into account definition \eqref{defz} of the coefficients of   $P(x,y,u)$, that the sum of all coefficients of $P(x,y,u)$ differing by that  vector in powers of $x,y,u$ has to be zero. Since if we put  $xy=1$ we obtain exactly the sum of coefficients, differing by the vector (1,1,0) in the powers of $x,y,u$, we conclude that  $P(x,y,u)=0$  at $xy=1$, i.e.  because it is the sum of finite number of terms, $P(x,y,u)$ is proportional to $(xy-1)$, i.e. $xy-1$ divides $P(x,y,u)$. 
	
	Similarly, the requirement that $Q=1$ on the line $2\alpha+\beta=0$ leads to the statement that $(x^2y-1)$ is a divisor of  $P(x,y,u)$. 
	Next, the line $\alpha+2\beta=0$ leads to the statement  $(xy^2-1) | P(x,y,u)$. 
	
	We shall call these lines  having  weight 0, according to the power of $u$ in the divisor of  $P(x,y,u)$, which also coincides with the coefficient of $\gamma$ in the equation of the line. This is the case also for all other lines.
	
	The remaining lines are of weights 1 and 2. These, and their corresponding divisors, together with previous zero-weight lines, are given in \cref{tab:mytable}.

\
	 \begin{table}[ht]
	 	\centering
	 	\caption{Lines, divisors, and weights}
	 	\label{tab:mytable}
	 	
	 	\begin{tabular}{|c|c|c|}
	 		\hline
	 		\text{Line} $l$ & \text{Divisor} $q(l)$ & \text{Weight} $w(l)$\\
	 		\hline
	 		$\alpha +\beta$=0 & $xy-1$&0 \\
	 		$2\alpha$+$\beta$=0& $xy^2-1$& 0\\
	 		$\alpha$+2$\beta$=0& $xy^2-1$ & 0\\
	 		$\alpha$+$\gamma$=0&$xu-1$ &1 \\
	 		$\beta$+$\gamma$=0&$yu-1$ & 1\\
	 		2$\alpha$+$\gamma$=0&$x^2u-1$ & 1\\
	 		2$\beta$+$\gamma$=0&$y^2-1$ & 1 \\
	 		$\gamma$=2$\alpha$+2$\beta$&$x^2y^2-u$ & 1 \\
	 		$\alpha$+2$\gamma$=0&$xu^2-1$ & 2\\
	 		$\beta$+2$\gamma$=0&$yu^2-1$ &2 \\
	 		$\beta$=2$\alpha$+2$\gamma$&$x^2u^2-y$ &2 \\
	 		$\alpha$=2$\beta$+2$\gamma$&$y^2u^2-x$ & 2\\
	 		\hline
	 	\end{tabular}
	 	
	 \end{table}
	 
	 No pair of divisors in \cref{tab:mytable}  has a  common divisor. So, from \cref{tab:mytable} one can deduce which sets of lines do not allow a non-uniqueness factor $Q$. Indeed, if $Q$ is equal  to 1 on lines $l_1, l_2...$, then $P(x,y,u)$ is divided by the product of corresponding divisors $q=q(l_1) q(l_2)...$. However, since $P(x,y,u)$ is in maximum quadratic in $u$,  $q$ may be a maximum quadratic in $u$, too. So, if the sum of weights of lines on which $Q$ should be one is $> 2$, i.e. $w(l_1)+w(l_2)+... > 2$,  such $Q$ doesn't exist. 
	 
	 	\end{proof}

	 One can easily list all combinations of lines, which forbids non-uniqueness factors. E.g., if universal quantum dimension formula is non-trivial on all permutations of the exceptional line, it is unique. So,  the formula for the universal multiplet E \cite{AvetisyanMkrtchyan2026},  is unique, as anticipated in \cite{AvetisyanMkrtchyan2026}. 
	 
	 Generally, all known universal quantum dimension formulae \cite{Westbury2003,Mkrtchyan2017,AvetisyanMkrtchyan2020a,AvetisyanMkrtchyan2020,Mkrtchyan2026,AvetisyanMkrtchyan2026} are unique in this sense (to be submitted). 
	 
	 Conversely, we assume that if the the sum of the weights of lines is less than 3, one can always find the solution for $Q$. We didn't try to prove this, instead let's present two examples. For the case of three lines of weight 0, we can present some "random" non-trivial $Q$, which is 1 on those lines:
	 
		\begin{align}
		Q=\frac{\sinh\left(\frac{(-2\beta)x}{4}\right)}{\sinh\left(\frac{\alpha x}{4}\right)}\frac{\sinh\left(\frac{(2\beta+5\alpha)x}{4}\right)}{\sinh\left(\frac{(2\alpha-\beta) x}{4}\right)}\frac{\sinh\left(\frac{(3\beta+4\alpha)x}{4}\right)}{\sinh\left(\frac{(6\alpha+4\beta) x}{4}\right)} 
	   \end{align}
	   
	   We can add one line with weight 1, e.g., the exceptional line $\gamma=2\alpha+2\beta$, so that the sum of weights is 1, and find a solution with k=6. Below,  we present, for that solution, the arguments of the red and green hyperbolic sines without the multiplier $(x/4)$:
	   
	   Red's arguments:
	   \begin{align}
	   	\alpha+\beta, 	6\alpha+5\beta, 	5\alpha+6\beta, 	\alpha+\gamma, 	\beta+\gamma, 	5\alpha+5\beta+\gamma, 
	   \end{align} 
	   Green's arguments:
	     \begin{align}
	   	3\alpha+2\beta, 	2\alpha+3\beta, 	7\alpha+7\beta, 	-\alpha -\beta +\gamma, 	4\alpha+3\beta+\gamma, 	3\alpha+4\beta+\gamma, 
	   \end{align}
	   It is easy to check that on the abovementioned four lines these two multisets coincide.

	   \section{Conclusion}
	   
	   In the above discussion of the uniqueness of universal formulae, we assume that all universal quantum dimension formulae are of the form \eqref{gform}, which is the case for all known formulae. We derive the criteria for the uniqueness of such formulae, and note that it is  fulfilled for all existing ones. It is natural to assume that these features will extend to all other, yet unknown, universal quantum dimension formulae, establishing their uniqueness. We assume that the uniqueness statement can be useful in the search for new universal formulae, as was the case in \cite{AvetisyanMkrtchyan2026}, and in other cases.

	\section*{Acknowledgments}
	We are indebted to Mikayel Mkrtchyan for useful discussions. 
	
	This work was partially supported by the Science Committee of the Ministry of Science and Education of the Republic of Armenia under contracts 21AG-1C060 and 24WS-1C031. 
	
	\section*{AI usage}
	
	ChatGPT was used during the research. Particularly, the proof of \cref{thm:2} was obtained with the assistance of ChatGPT. The other theorems and statements were discovered and proved by standard methods, both manually and with the use of Wolfram Mathematica\texttrademark, and were also checked with ChatGPT. Overall, all proofs and calculations were independently verified by standard methods, and the authors bear full responsibility for the correctness of all results presented in the paper.
	
	\appendix

		\section{Vogel's table}
	
	Here we reproduce  Vogel's table \ref{tab:Vogel} of points in the projective plane corresponding to simple Lie algebras. 
	\begin{table}[h] \caption{Vogel's parameters for simple Lie algebras}     \label{tab:Vogel}
		\begin{tabular}{|r|r|r|r|r|r|} 
			\hline & $\alpha$ &$\beta$  &$\gamma$  & $t=  \alpha+\beta+\gamma$ & Line \\ 
			\hline $sl(N)$ & -2 & 2 & $N$ & $N$ & $\alpha+\beta=0 $\\ 
			\hline $so(N) $ & -2  & 4 & $N-4$ & $N-2$ & $2\alpha+\beta=0$ \\ 
			\hline $sp(2n)$ & -2  & 1 & $n+2$ & $n+1$ & $\alpha+2\beta=0$ \\ 
			\hline $Exc(k)$ & -2 & $k+4$  & $2k+4$ & $3k+6$& $\gamma=2(\alpha+\beta)$ \\ 
			\hline 
		\end{tabular} 
	\end{table}
	
	In the table \ref{tab:Vogel} for the exceptional line $Exc(k)$, $k=-1,-2/3,0,1,2,4,8$ correspond to $A_2,G_2, D_4, F_4, E_6, E_7, E_8$, respectively.

	\section{Quantum dimension of $B_n$ as quantum dimension of $D_n$}

	We consider tensor representations of both algebras, described by a fixed Young diagram
	\begin{align}
	\lambda=(l_1,\ldots,l_r),
	\qquad
	l_1\geq\cdots\geq l_r>0,
\end{align}

	where $r$ and all $l_i$ are independent of $N$ which is much larger than $r$. So, spinor representations are not included.
	
	Set
	\begin{align}
	S(a):=\sinh\left(\frac{x}{2}a\right).
	\end{align}
	
	We shall prove that the Weyl quantum-dimension formula gives the same expression for	$D_n=\mathfrak{so}(2n)$  and  $B_n=\mathfrak{so}(2n+1)$ 	as a function of $N=2n$ in the first case, and $N=2n+1$ in the second case. 
	
	 1. Uniform form of the Weyl vector
	
	For $D_n=\mathfrak{so}(2n)$,
	
	 \begin{align}  \rho_{D_n}=\sum_{i=1}^n(n-i)e_i.  \end{align}
	
	Since $N=2n$, this is
	
	\begin{align}
	\rho_i=\frac {N}{2}-i. \end{align}

	For $B_n=\mathfrak{so}(2n+1)$,
	
	\begin{align}
	\rho_{B_n}=	\sum_{i=1}^n\left(n-i+\frac12\right)e_i.
	\end{align}
		
	Since $N=2n+1$,
	\begin{align}
	n-i+\frac12=\frac N2-i.
	\end{align}
	
	Thus in both cases
	\begin{align}
	\rho_i=\frac N2-i.
	\end{align}
	
	This is the basic reason that a parity-independent formula exists.
	
	The highest weight corresponding to the Young diagram is
	\begin{align} 
	\lambda=\sum_{i=1}^r l_i e_i,
	\end{align} 
	with
	\begin{align}
	l_i=0,\quad i>r.
	\end{align}
	
	2. Roots common to $B_n$ and $D_n$
	
	Both root systems contain
	\begin{align}
	e_i-e_j,\quad e_i+e_j,\quad i<j.
	\end{align}
	
	For $e_i-e_j$,
	\begin{align}
	(\rho,e_i-e_j)=j-i,
	\end{align}
	and
	\begin{align}
	(\lambda+\rho,e_i-e_j)=	j-i+l_i-l_j.
	\end{align}
	
	Therefore the corresponding contribution is
	\begin{align}
	\frac{
		S(j-i+l_i-l_j)
	}{
		S(j-i)
	}.
	\end{align}
	
	For $e_i+e_j$,
	\begin{align}
	(\rho,e_i+e_j)=N-i-j,
	\end{align}
	and
	\begin{align}
	(\lambda+\rho,e_i+e_j)=	N-i-j+l_i+l_j.
	\end{align}
	
	Therefore its contribution is
	\begin{align}
	\frac{
		S(N-i-j+l_i+l_j)
	}{
		S(N-i-j)
	}.
	\end{align}
	
	For $1\leq i<j\leq r$, these contributions are manifestly the same in types $B$ and $D$:
	
	\begin{align}\label{BD1} 
	\prod_{1\leq i<j\leq r}
	\frac{
		S(j-i+l_i-l_j)
	}{
		S(j-i)
	}
	\frac{
		S(N-i-j+l_i+l_j)
	}{
		S(N-i-j)
	}.
\end{align}
	
	If $i,j>r$, then
	\begin{align}
	l_i=l_j=0,
	\end{align}
	so all such factors cancel.
	
	It remains only to compare "tail contributions", i.e. the roots with:
	\begin{align}
	i\leq r<j.
	\end{align}
	
	3. Tail contribution $T_i^{(D)}$ in type $D_n$. 
	
	Now $N=2n$. Consider  $i\leq r$  fixed. 
	
	The roots
	\begin{align}
	e_i-e_j,\quad e_i+e_j,
	\quad j=r+1,\ldots,n,
	\end{align}
	give
	\begin{align}
	T_i^{(D)}=
	\prod_{j=r+1}^{n}
	\frac{S(j-i+l_i)}{S(j-i)}
	\frac{S(N-i-j+l_i)}{S(N-i-j)}.
	\end{align}
	
	The first product telescopes:
	\begin{align}
	\prod_{j=r+1}^{n}
	\frac{S(j-i+l_i)}{S(j-i)}=
	\prod_{p=1}^{l_i}
	\frac{S(n-i+p)}{S(r-i+p)}.
	\end{align}
	
	The second one gives
	\begin{align}
	\prod_{j=r+1}^{n}
	\frac{S(N-i-j+l_i)}{S(N-i-j)}=
	\prod_{p=1}^{l_i}
	\frac{S(N-i-r-1+p)}
	{S(N-i-n-1+p)}.
	\end{align}
	
	Since $N=2n$,
	\begin{align}
	N-i-n-1+p=n-i-1+p.
	\end{align}
	
	Consequently,
	\begin{align}
	T_i^{(D)}=
	\prod_{p=1}^{l_i}
	\frac{
		S(n-i+p)
	}{
		S(n-i+p-1)
	}
	\frac{
		S(N-i-r-1+p)
	}{
		S(r-i+p)
	}.
	\end{align}
	
	The first ratio telescopes once more:
	\begin{align}
	\prod_{p=1}^{l}
	\frac{S(n-i+p)}{S(n-i+p-1)}=
	\frac{S(n-i+l)}{S(n-i)}.
	\end{align}
	
	Using $n=N/2$, we obtain
	
	\begin{align}\label{BD2} 
			T_i^{(D)}=
		\frac{
			S\left(\frac N2-i+l\right)
		}{
			S\left(\frac N2-i\right)
		}
		\prod_{p=1}^{l}
		\frac{
			S(N-i-r-1+p)
		}{
			S(r-i+p)
		}.
	\end{align}
	
	Total contribution is the product of 	$T_i^{(D)}$ over $i$. 
	
	4. Tail contribution $T_i^{(B)}$ in type $B_n$
	
	Now let
	$
	N=2n+1.
	$
		
	The $B_n$ root system contains, in addition to $e_i\pm e_j$, the short positive roots 	$e_i $.

	For fixed $i\leq r$, the short root $e_i$ contributes
	\begin{align}
	\frac{
		S\left(\frac N2-i+l\right)
	}{
		S\left(\frac N2-i\right)
	}.
	\end{align}
	
	The long roots $e_i \pm e_j$ with $j=r+1,...,n$ give
	\begin{align}
	\prod_{j=r+1}^{n}
	\frac{S(j-i+l_i)}{S(j-i)}
	\frac{S(N-i-j+l_i)}{S(N-i-j)}.
	\end{align}
	
	As before, this becomes
	\begin{align}
	\prod_{p=1}^{l_i}
	\frac{S(n-i+p)}{S(r-i+p)}
	\frac{
		S(N-i-r-1+p)
	}{
		S(N-i-n-1+p)
	}.
	\end{align}
	
	But now
	$
	N=2n+1,
	$
	so 	$	N-i-n-1+p=n-i+p.	$
	
	Therefore the factors
	$
	S(n-i+p)
	$
	cancel exactly, leaving
	\begin{align}
	\prod_{p=1}^{l_i}
	\frac{
		S(N-i-r-1+p)
	}{
		S(r-i+p)
	}.
	\end{align}
	
	Multiplying by the short-root contribution we obtain
	\begin{align}\label{BD3}
		T_i^{(B)}=
		\frac{
			S\left(\frac N2-i+l_i\right)
		}{
			S\left(\frac N2-i\right)
		}
		\prod_{p=1}^{l_i}
		\frac{
			S(N-i-r-1+p)
		}{
			S(r-i+p)
		}.
	\end{align}
	
	This is the same function as that for $D_n$ \eqref{BD2}, simply taken for odd $N$. 
	
	5. Uniform formula
	
	Combining the finite part and the tail part, we obtain the final formula for quantum dimensions of $B_n$  and $D_n$ series, joining them into one function for $\soalg(N)$ algebra:

		\begin{align}
			\dim_q^{\mathfrak{so}(N)}V_\lambda
			=&
			\prod_{1\leq i<j\leq r}
			\frac{
				S(j-i+l_i-l_j)
			}{
				S(j-i)
			}
			\frac{
				S(N-i-j+l_i+l_j)
			}{
				S(N-i-j)
			}
			\\
			&\times
			\prod_{i=1}^{r}
			\frac{
				S\left(\frac N2-i+l_i\right)
			}{
				S\left(\frac N2-i\right)
			}
			\prod_{p=1}^{l_i}
			\frac{
				S(N-i-r-1+p)
			}{
				S(r-i+p)
			}.
		\end{align}
	where 
	\begin{align}
	S(a)=\sinh\left(\frac{x}{2}a\right).
	\end{align}
	
	Thus the $B_n$ and $D_n$ expressions are restrictions, respectively to odd and even integer values of $N$, of the same universal $\soalg(N)$ quantum-dimension function. Since we already proved \cref{thm:1} for $D_n$, this establishes that for $B_n$.

	\section{Theorem \eqref{thm:1} for $\slalg$ and $\spalg$}

	\begin{proof}
		
		For $\slalg(N) \sim A_{N-1}$,  in the same orthonormal basis $e_i$, the positive roots, fundamental weights and Weyl vector are
		
		\begin{align}
		\Phi^+(A_{N-1})&=\{e_i-e_j| 1\leq i <  j \leq N\} \\
		\omega_i &= e_1+...+e_i - \frac{i}{N} \sum_{j=1}^{N} e_j \\
		\rho&= \frac{1}{2} \sum_{i=1}^{N} (N+1-2i)e_i
		\end{align}

		Coefficients of the decomposition of $\lambda$ over $e_i$ are 
		\begin{align}
		l_i=\sum_{r=i}^{k}m_r,\quad i\le k,
		\end{align}
		and  $l_i=0$ for $i>k$. 
		We have 
		\begin{align}
		(\rho,e_i-e_j)=j-i,\qquad
		(\lambda,e_i-e_j)=l_i-l_j.
		\end{align}
		Hence the Weyl formula becomes
		\begin{align}
		\dim_qV_\lambda=
		\prod_{1\le i<j\le N}
		\frac{
			\sinh\left(\frac{x}{4}\,2(j-i+l_i-l_j)\right)
		}{
			\sinh \left(\frac{x}{4}\,2(j-i)\right)
		}.
		\end{align}
		
		We split the product into three parts.
		
		If $i,j\le k$, all arguments are independent of $N$, so these factors contribute only terms with $c=0$.
		
		If $i,j>k$, then $l_i=l_j=0$, and the corresponding numerator and denominator cancel, contribution is 1.
		
		Finally, let $i\le k<j$. For fixed $i$, the corresponding contribution is
		\begin{align}
		\prod_{j=k+1}^{N}
		\frac{
			\sinh\left(\frac{x}{4}\,2(j-i+l_i)\right)
		}{
			\sinh\left(\frac{x}{4}\,2(j-i)\right)
		}.
		\end{align}
		This product telescopes to
		\begin{align}
		\prod_{p=1}^{l_i}
		\frac{
			\sinh\left(\frac{x}{4}\,2(N-i+p)\right)
		}{
			\sinh \left(\frac{x}{4}\,2(k-i+p)\right)
		}.
		\end{align}
		
		Therefore every surviving hyperbolic sine has argument of the form
		\begin{align}
		\frac{x}{4}(cN+a),
		\quad c\in\{0,2\}.
		\end{align}
		
	with integer $a$. This finishes the proof. 
	
	Actually, we need the same statement for the representations with Dynkin labels of type 
	
	\begin{align}(\lambda_1,..., \lambda_k,...,\tau_k,...,\tau_1)\end{align}
	
	with $N$-independent $k,\lambda_i, \tau_i$. However, the proof is similar, and we omit it. 
	\end{proof}

	Next we prove the theorem for $\spalg(N)$ algebra.

	\begin{proof}
		
		Now we choose normalization of orthogonal basis $e_i$ as
		
		\begin{align}
		(e_i,e_j)=\frac{1}{2}\delta_{ij},
		\end{align}
		so that the long roots have squared length $2$, as required. 
		
		Again denote 
		\begin{align}
		l_i=\sum_{r=i}^{k}m_r,\quad i\leq k,
		\end{align}
		and $l_i=0$ for $i>k$. The positive roots of $C_n$ are
		\begin{align}
		e_i-e_j,\quad e_i+e_j\quad (i<j),\quad 2e_i.
		\end{align}

		In the chosen normalization, their contributions to the Weyl product are respectively 
		\begin{align}
		\frac{
			\sinh\left(\frac{x}{4}(j-i+l_i-l_j)\right)
		}{
			\sinh \left(\frac{x}{4}(j-i)\right)
		},
		\end{align}
		\begin{align}
		\frac{
			\sinh\left(\frac{x}{4}(N-i-j+2+l_i+l_j)\right)
		}{
			\sinh\left(\frac{x}{4}(N-i-j+2)\right)
		},
		\end{align}
		and
		\begin{align}
		\frac{
			\sinh\left(\frac{x}{4}(N-2i+2+2l_i)\right)
		}{
			\sinh\left(\frac{x}{4}(N-2i+2)\right)
		},
		\end{align}
		respectively.
		
		The factors with $i,j>k$ cancel identically. For fixed $i\leq k$, the
		contribution of roots $e_i - e_j$ telescopes as
		\begin{align}
		\prod_{j=k+1}^{n}
		\frac{
			\sinh\left(\frac{x}{4}(j-i+l_i)\right)
		}{
			\sinh\left(\frac{x}{4}(j-i)\right)
		}
		=
		\prod_{p=1}^{l_i}
		\frac{
			\sinh\left(\frac{x}{4}(n-i+p)\right)
		}{
			\sinh\left(\frac{x}{4}(k-i+p)\right)
		}.
		\end{align}
		Since $n=N/2$, these sines have    the form 	$\frac{x}{4}(cN+a)$ with      coefficients $c=1/2$ and $c=0$, and integer $a$.
		
		Similarly, the 	contribution of roots $e_i + e_j$ telescopes as
		\begin{align}
		\prod_{j=k+1}^{n}
		\frac{
			\sinh\left(\frac{x}{4}(N-i-j+2+l_i)\right)
		}{
			\sinh\left(\frac{x}{4}(N-i-j+2)\right)
		}
		=
		\prod_{p=1}^{l_i}
		\frac{
			\sinh\left(\frac{x}{4}(N-i-k+1+p)\right)
		}{
			\sinh\left(\frac{x}{4}(n-i+1+p)\right)
		}.
		\end{align}
		Its arguments therefore have coefficients $c=1$ and $c=1/2$.
		
		The remaining roots have indices at most $k$. 	Thus every surviving argument is of the form
		\begin{align}
		a,\quad \frac{N}{2}+a,\quad N+a,
		\end{align}
		with integer $a$, and consequently
		\begin{align}
		c\in\left\{0,\frac12,1\right\}.
		\end{align}
	as stated.
	\end{proof}

	\end{document}